\newcommand{\cosimcid}{\mathcal{C}_{\text{id}}^{\text{sim}}}
\newcommand{\cocid}{\mathcal{C}_{\text{id}}}
\newcommand{\simcid}{C_{\text{id}}^{\text{sim}}}
\newcommand{\cid}{C_{\text{id}}}
\newcommand{\alx}{\mathcal{X}}
\newcommand{\aly}{\mathcal{Y}}
\newcommand{\hi}{\mathcal{H}}
\newcommand{\st}{\mathcal{S}}
\newcommand{\co}{\mathcal{C}}
\newcommand{\floor}[1]{\lfloor #1 \rfloor}
\newcounter{thm}
\newtheorem{theorem}[thm]{Theorem}
\newtheorem{definition}[thm]{Definition}
\newtheorem{corollary}[thm]{Corollary}
\newtheorem{lemma}[thm]{Lemma}
\begin{document}
\title{The Simultaneous Identification Capacity of the Classical - Quantum Multiple Access Channel with Stochastic Encoders for Transmission} 
	
\author{Holger Boche, Stephen Diadamo\\
	\normalsize{Lehrstuhl f\"ur Theoretische Informationstechnik}\\
	\normalsize{ Technische Universit\"at M\"unchen}\\
	\normalsize{\small{\{boche, stephen.diadamo\}@tum.de}}
}
\maketitle
\thispagestyle{plain}
\pagestyle{plain}

\begin{abstract}
	In \cite{sim_id}, it is shown that the simultaneous identification capacity region for the discrete, memoryless, classical-quantum multiple access channel is equal to the transmission capacity region for codes using a deterministic encoding scheme. Here, we use transmission codes with a stochastic encoding scheme and show that the analogous capacity theorem holds. Further, by using stochastic encoders, the proof of achievability is highly simplified.
\end{abstract}
	
\section{Introduction}
    The process of message identification is to receive a message and identify if it is a particular message of interest or not, thus obtaining one bit of information. This differs from message transmission, where the receiver attempts to decode the message entirely.  Ahlswede and Dueck introduce message identification theory in \cite{ahlswede_dueck_89} with a key result being that, over a classical channel, by using a stochastic encoding scheme for their identification codes, the size of the identification codes grow doubly-exponentially quickly in block-length. For classical-quantum message identification, L\"ober proves an analogous result  \cite{loeber}. In certain cases, because of the effects of quantum mechanics, identication in the usual scence is not possible. For example, if a receiver is not the final receiver of a communication chain, a measurement on the state will affect the output quantum state, ruining the information content. To overcomes this, L\"ober introduces a model called a simultaneous ID-code which requires a single measurement of the output of the channel which can identify every message at the same time. Boche and Diadamo show that over a discrete, memoryless classical-quantum multiple access channel with transmission codes that use a deterministic encoding scheme, the transmission capacity region is equal to its message identification capacity region  \cite{sim_id}. 
     
    In the current work, we define a transmission code for the classical-quantum multiple access channel which uses a stochastic encoding scheme. This is a more natural approach since the identification codes are also using a stochastic encoding scheme. We define the necessary models for the transmission and identification codes and in the following sections, we prove the main capacity theorem for this report.

\section{Notation and Definitions}
    By $\alx$ or $\aly$ we refer to finite alphabets. The $k$-fold product set of an alphabet $\alx$ is denoted as $\alx^k \coloneqq \alx \times ... \times \alx$. The space of quantum states with respect to a particular Hilbert space $\hi$ is denoted as $\mathcal{S}(\hi)$ and the set of linear operators on $\hi$ is denoted as $\mathcal{L}(\hi)$. Here we only consider finite dimensional, complex Hilbert spaces. The set of probability distributions on another set $\alx$ is denoted as $\mathcal{P}(\alx)$. Quantum channels, usually denoted as $W$, in this report are always completely positive and trace preserving maps.  

    \begin{definition}[CCQ channel]
    	A classical-classical-quantum (CCQ) channel, that is, a channel with two classical senders and one quantum receiver, $\mathbf{W}$ is a family,
    	\begin{align*}
    	\mathbf{W} \coloneqq \{ W^k : \alx^k \times \aly^k \rightarrow \mathcal{S}(\mathcal{H}^{\otimes k})\}_{k\in\mathbb{N}}.
    	\end{align*}
    	When $\mathbf{W}$ is a discrete-memoryless classical-classical-quantum channel (DM-CCQ), we refer to $\mathbf{W}$ simply as $W$.
    \end{definition}

    \begin{definition}[$(k,M,N)$-code]
        For a CCQ channel $\mathbf{W}$, a (randomized) $(k,M,N)$-code for  message transmission is the family $\mathcal{C} \coloneqq (P_m, Q_n,D_{mn})_{m=1,n=1}^{M,N}$ where $P_1,...,P_M \in \mathcal{P}(\alx^k)$, $Q_1,...,Q_N\in\mathcal{P}(\aly^k)$, and $(D_{mn})_{m=1,n=1}^{M,N}\subseteq \mathcal{L}(\hi^{\otimes k})$ forms a POVM.
    	\\\\\noindent
    	For a $(k,M,N)$-code $\co$, we define the error as,
    	\begin{align*}
            & e(\co, W^k) \coloneqq \\ &\max_{\substack{m\in[M] \\ n\in[N]}}  1 -  \sum_{\substack{x^k\in\alx^k \\ y^k\in\aly^k}} P_m(x^k)Q_n(y^k)\tr\left( D_{mn}W^k(x^k,y^k) \right), 
	\end{align*}
    \end{definition}
    
    \begin{definition}[$(k,M,N)$-ID-code]
	For a CCQ channel $\mathbf{W}$, a (randomized) $(k,M,N)$-ID-code for classical message identification is the family $\cocid \coloneqq (P_m, Q_n,I_{mn})_{m=1,n=1}^{M,N}$ where $P_1,...,P_M \in \mathcal{P}(\alx^k)$, $Q_1,...,Q_N\in\mathcal{P}(\aly^k)$, and $(I_{mn})_{m=1,n=1}^{M,N}\subseteq \mathcal{L}(\hi^{\otimes k})$ such that $0 \leq I_{mn} \leq \mathbbm{1}_{\hi^{\otimes k}}$ for all $m\in[M]$ and $n \in [N]$.
	\\\\\noindent
	For a $(k,M,N)$-ID-code $\cocid$, we define two types of errors,
	\begin{align*}
        & e_1(\cocid, W^k) \coloneqq \\ &\max_{\substack{m\in[M] \\ n\in[N]}}  1 -  \sum_{\substack{x^k\in\alx^k \\ y^k\in\aly^k}} P_m(x^k)Q_n(y^k)\tr\left( I_{mn}W^k(x^k,y^k) \right), \\
		&e_2(\cocid, W^k) \coloneqq   \\
		&\max_{\substack{m,m'\in[M], \\ n,n'\in[N] \\ 
		(m,n)\neq (m',n')}}   \sum_{\substack{x^k\in\alx^k \\ y^k\in\aly^k}}P_m(x^k)Q_n(y^k)\tr\left(I_{m'n'}W^k(x^k,y^k) \right).
	\end{align*}		
\end{definition}

\begin{definition}[Simultaneous $(k,M,N)$-ID-code]
	A $(k,M,N)$-ID-code $\cocid \coloneqq (P_m, Q_n, I_{mn})_{m=1,n=1}^{M,N}$ is called simultaneous if for $R,S\in\mathbb{N}$ there exists a POVM $(E_{rs})_{r=1,s=1}^{R,S}$ with subsets $A_1, ..., A_M \subset [R]$ and $B_1, ..., B_N \subset [S]$ such that for each $m \in [M]$ and $n\in[N]$, 
	\begin{align*} 
		I_{mn} = \sum_{i\in A_m} \sum_{j \in B_n} E_{ij}. 
	\end{align*}
\end{definition}

\begin{definition}[Achievable rate pair]
	For a CCQ channel $\mathbf{W}$, we say $(R_1, R_2) \in \mathbb{R}^2$, $R_1, R_2 \geq 0$, is an achievable rate pair if for all $\epsilon, \delta > 0$, there exists a $k_0$ such that for all $k\geq k_0$ there exists a $(k,M,N)$-code $\co$ such that, 
	\begin{align*}
		\begin{aligned}
		\frac{1}{k} \log M \geq R_1 - \delta,  \hspace{3mm} \frac{1}{k}\log N \geq R_2 - \delta, \hspace{3mm} e(\co, W^k) \leq \epsilon
		\end{aligned}
	\end{align*}
	The capacity region for $\mathbf{W}$ is defined as
	\begin{align*}
		C_{\text{st}}(\mathbf{W}) \coloneqq \{(R_1, R_2) \mid (R_1, R_2) \text{ is an achievable rate pair} \}. 
	\end{align*}
\end{definition}

\begin{definition}[Achievable simultaneous ID-rate pair] 
	For a CCQ channel $\mathbf{W}$, we say $(R_1, R_2) \in \mathbb{R}^2$, $R_1, R_2 \geq 0$, is an achievable simultaneous ID-rate pair if for $\epsilon_1, \epsilon_2, \delta > 0$, there exists a $k_0$ such that for all $k \geq k_0$ there is a simultaneous $(k,M,N)$-ID-code $\cosimcid$ with
	\begin{align*}
		\begin{aligned}
		\frac{1}{k} \log\log M \geq R_1 - \delta, & \hspace{3mm} \frac{1}{k}\log\log N \geq R_2 - \delta, \\  e_1(\cosimcid, W^k) \leq \epsilon_1,  &\hspace{3mm} e_2(\cosimcid, W^k) \leq \epsilon_2. 
		\end{aligned}
	\end{align*}
	The simultaneous ID capacity region for a CCQ channel $\mathbf{W}$ is defined as		
	\begin{align*}
		  & \cid^{\text{sim}}(\mathbf{W}) \coloneqq \\ &\{(R_1, R_2) \mid (R_1, R_2) \text{ is an achievable sim. ID-rate pair} \}. 
	\end{align*}
\end{definition}

\section{Capacity Theorem}
\begin{theorem}\label{main_result}
	For a DM-CCQ channel generated by $W:\alx\cross\aly\rightarrow\st(\hi)$, 
	\begin{align*}
		C_{\text{st}}(W) = \cid^{\text{sim}}(W). 
	\end{align*}
\end{theorem}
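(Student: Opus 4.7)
The plan is to establish the two inclusions $\cid^{\text{sim}}(W) \subseteq C_{\text{st}}(W)$ and $C_{\text{st}}(W) \subseteq \cid^{\text{sim}}(W)$ separately. The converse inclusion is essentially free from \cite{sim_id}: a deterministic transmission code is a stochastic transmission code with each $P_m$ (respectively $Q_n$) a Dirac mass on some $x^k \in \alx^k$ (respectively $y^k \in \aly^k$), so the deterministic-encoder transmission capacity region is contained in $C_{\text{st}}(W)$, and the result of \cite{sim_id} identifies the former with $\cid^{\text{sim}}(W)$.

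The substance of the proof lies in the achievability direction, which I would handle by a two-sender adaptation of the Ahlswede--Dueck random-subset construction applied directly to a stochastic transmission code. Given $(R_1, R_2) \in C_{\text{st}}(W)$ and $\delta, \epsilon > 0$, fix for each sufficiently large $k$ a $(k, M, N)$-code $\co = (P_m, Q_n, D_{mn})$ with $e(\co, W^k) \leq \epsilon$ and rates at least $R_1 - \delta/2$ and $R_2 - \delta/2$. With $K \coloneqq \lceil 2^{2^{k(R_1 - \delta)}} \rceil$ and $L \coloneqq \lceil 2^{2^{k(R_2 - \delta)}} \rceil$, exhibit subset families $A_1, \ldots, A_K \subset [M]$ and $B_1, \ldots, B_L \subset [N]$ of fixed cardinalities $M'$ and $N'$ with controllably small pairwise intersections, and form the sim ID-code $\cosimcid$ by
\begin{equation*}
\tilde P_u \coloneqq \tfrac{1}{M'} \sum_{m \in A_u} P_m, \quad \tilde Q_v \coloneqq \tfrac{1}{N'} \sum_{n \in B_v} Q_n, \quad I_{uv} \coloneqq \sum_{\substack{m \in A_u \\ n \in B_v}} D_{mn}.
\end{equation*}
Since $(D_{mn})$ is already a POVM, the simultaneity requirement holds automatically, with the $D_{mn}$ themselves playing the role of the underlying POVM $(E_{rs})$ and the subsets $A_u, B_v$ serving as the index sets in the definition.

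The error analysis is where using stochastic transmission encoders pays off. For $e_1$, one uses $I_{uv} \geq D_{m^*n^*}$ whenever $(m^*, n^*) \in A_u \times B_v$: averaging over $(m^*, n^*)$ uniform on $A_u \times B_v$ lower-bounds the detection probability by the average transmission success probability, giving $e_1(\cosimcid, W^k) \leq \epsilon$ immediately. For $e_2$ with $(u,v) \neq (u',v')$ and any sample point $(m^*, n^*) \in A_u \times B_v$, the operator $\sum_{(m,n) \in A_{u'} \times B_{v'}} D_{mn}$ is dominated by $\mathbbm{1}_{\hik}$ if $(m^*, n^*) \in A_{u'} \times B_{v'}$ and by $\mathbbm{1}_{\hik} - D_{m^*n^*}$ otherwise, contributing at most $1$ in the former case and at most $\epsilon$ in the latter by the transmission error bound. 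Summing over the $M'N'$ sample points yields
\begin{equation*}
e_2(\cosimcid, W^k) \leq \frac{|A_u \cap A_{u'}| \cdot |B_v \cap B_{v'}|}{M' N'} + \epsilon,
\end{equation*}
which is at most $\alpha + \epsilon$ as soon as $\max(|A_u \cap A_{u'}|/M',\, |B_v \cap B_{v'}|/N') \leq \alpha$ for every distinct pair, noting that at least one of $u \neq u'$, $v \neq v'$ must hold.

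The main obstacle is producing the subset systems of doubly-exponential sizes $K, L$ satisfying the small-intersection property. I would establish their existence by a probabilistic argument: choosing each $A_u$ independently and uniformly among the $M'$-subsets of $[M]$ (and similarly each $B_v$), we have $\mathbb{E}|A_u \cap A_{u'}| = (M')^2/M$, and a Chernoff-type concentration bound together with a union bound over the $\binom{K}{2}$ pairs — with $M'$ chosen slightly sub-$M$, for instance $M' \sim M^{1-\eta}$ for small $\eta > 0$ — guarantees with positive probability that all pairwise intersections stay below $\alpha M'$ even when $K$ is doubly exponential in $k$. This packing step, essentially a two-sided version of Ahlswede and Dueck's subset lemma, is the technical heart of the proof and must be calibrated so that $M'$ and $N'$ are both large enough to make the error terms small yet small enough to admit doubly-exponentially many near-disjoint subsets.
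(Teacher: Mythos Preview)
Your proposal is correct and matches the paper's approach: achievability via the Ahlswede--Dueck subset construction applied to a stochastic transmission code (the paper simply cites L\"ober's packing lemma with $|A_i|=\lfloor\lambda M\rfloor$ rather than reproving it probabilistically, and its $e_2$ analysis uses a four-term split of $A_m\times B_n$ into $(A_m\cap A_a\,\text{or}\,A_m\setminus A_a)\times(B_n\cap B_b\,\text{or}\,B_n\setminus B_b)$ yielding $4\epsilon$, where your two-term split gives the slightly sharper $\alpha+\epsilon$), and the converse via $\cid^{\text{sim}}(W)=C(W)$ from \cite{sim_id} together with $C(W)\subseteq C_{\text{st}}(W)$. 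The only subtlety is that $C(W)$ in \cite{sim_id} is defined under \emph{average} error, so your Dirac-mass embedding---which preserves the \emph{maximal} error---does not directly give the needed inclusion; the paper instead takes every $P_m$ (resp.\ $Q_n$) to be the uniform distribution over the entire deterministic codebook, so that the per-message stochastic error collapses to the deterministic average error.
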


\section{Proof of Achievability}
\begin{theorem}\label{sim_ach}
	For a CCQ channel $\mathbf{W}$, not necessarily discrete and memoryless, 
	\begin{align*}
		C_{\text{st}}(\mathbf{W}) \subseteq \simcid(\mathbf{W}). 
	\end{align*} 
\end{theorem}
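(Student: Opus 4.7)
The plan is to adapt the Ahlswede--Dueck reduction from transmission to identification to the CCQ multiple access setting, symmetrically over both senders, taking advantage of the stochastic encoders to avoid any explicit combinatorial coloring. Fix $(R_1, R_2) \in C_{\text{st}}(\mathbf{W})$ and $\epsilon_1, \epsilon_2, \delta > 0$. For $k$ sufficiently large, pick a transmission $(k, M, N)$-code $\co = (P_m, Q_n, D_{mn})_{m,n}$ with $\tfrac{1}{k}\log M \geq R_1 - \tfrac{\delta}{2}$, $\tfrac{1}{k}\log N \geq R_2 - \tfrac{\delta}{2}$, and $e(\co, W^k) \leq \lambda$, where $\lambda$ is chosen strictly below $\min(\epsilon_1, \epsilon_2)$. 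From $\co$, I would construct a simultaneous $(k, \tilde M, \tilde N)$-ID-code with $\tilde M = 2^{\floor{2^{k(R_1 - \delta)}}}$ and $\tilde N = 2^{\floor{2^{k(R_2 - \delta)}}}$ ID-messages.

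The construction draws, for each $\tilde m \in [\tilde M]$, an independent uniformly random subset $A_{\tilde m} \subseteq [M]$ of a chosen size $T_1$, and similarly $B_{\tilde n} \subseteq [N]$ of size $T_2$ for each $\tilde n \in [\tilde N]$. Define the ID stochastic encoders as the mixtures $\tilde P_{\tilde m} \coloneqq \tfrac{1}{T_1}\sum_{m \in A_{\tilde m}} P_m$ and $\tilde Q_{\tilde n} \coloneqq \tfrac{1}{T_2}\sum_{n \in B_{\tilde n}} Q_n$, and the ID-decoders as $I_{\tilde m \tilde n} \coloneqq \sum_{m \in A_{\tilde m},\, n \in B_{\tilde n}} D_{mn}$. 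Simultaneity then holds automatically with $(E_{rs})_{r,s} \coloneqq (D_{rs})_{r,s}$ being the common POVM and $A_{\tilde m}, B_{\tilde n}$ the required index subsets.

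The type-1 error is controlled deterministically: averaging the transmission guarantee over $A_{\tilde m}\times B_{\tilde n}$ immediately gives $e_1(\cosimcid, W^k) \leq \lambda$, regardless of the random draw. For the type-2 error at $(\tilde m, \tilde n) \neq (\tilde m', \tilde n')$, I split the quadruple sum defining $\tr(I_{\tilde m' \tilde n'} W^k(\tilde P_{\tilde m}, \tilde Q_{\tilde n}))$ into a \emph{collision} part, indexed by $(m = m', n = n')$ with $m \in A_{\tilde m}\cap A_{\tilde m'}$ and $n \in B_{\tilde n}\cap B_{\tilde n'}$, and a \emph{residual} part where $(m', n') \neq (m, n)$. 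The collision contribution is at most $|A_{\tilde m}\cap A_{\tilde m'}|\,|B_{\tilde n}\cap B_{\tilde n'}|/(T_1 T_2)$, and the residual one is at most $\lambda$ since for each fixed $(m,n)$ the transmission error bound gives $\sum_{(m',n')\neq(m,n)} \sum_{x^k, y^k} P_m(x^k) Q_n(y^k) \tr(D_{m' n'} W^k(x^k, y^k)) \leq \lambda$. Hence $e_2 \leq |A_{\tilde m}\cap A_{\tilde m'}|\,|B_{\tilde n}\cap B_{\tilde n'}|/(T_1 T_2) + \lambda$.

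The main obstacle is ensuring that a deterministic choice of subsets exists for which the intersection ratios stay uniformly small over all doubly-exponentially many index pairs. I would apply a Chernoff--Hoeffding bound to each independently drawn pair $A_{\tilde m}, A_{\tilde m'}$ (resp.\ $B_{\tilde n}, B_{\tilde n'}$) to obtain a doubly-exponential concentration of the intersection size around its mean $T_1^2/M$, and then union bound over the $\tilde M^2 \tilde N^2$ pairs of distinct ID-indices. Calibrating $T_1, T_2$ so that the Chernoff exponent strictly dominates $\log(\tilde M^2 \tilde N^2) = O(2^{k(R_1 - \delta)} + 2^{k(R_2 - \delta)})$ uses precisely the slack between the transmission rate $R_i - \delta/2$ and the target ID-rate $R_i - \delta$; once this is arranged, a standard probabilistic existence argument extracts a deterministic realisation satisfying $e_1 \leq \epsilon_1$ and $e_2 \leq \epsilon_2$ at every pair, proving $(R_1, R_2) \in \simcid(\mathbf{W})$.
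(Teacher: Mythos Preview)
Your construction---mixture encoders $\tilde P_{\tilde m}, \tilde Q_{\tilde n}$ over index subsets $A_{\tilde m}\subseteq[M]$, $B_{\tilde n}\subseteq[N]$, with $I_{\tilde m\tilde n}=\sum_{A_{\tilde m}\times B_{\tilde n}}D_{mn}$ and simultaneity witnessed by the transmission POVM itself---is exactly the paper's, and the type-1 bound is identical. The differences lie in two places. For the existence of doubly-exponentially many subsets with uniformly small pairwise overlaps, the paper simply invokes L\"ober's combinatorial Lemma~\ref{lem:lober}, whereas you re-derive the same conclusion by random sampling plus Chernoff and a union bound; this is essentially how such lemmas are proved, so the content is the same, though citing the lemma is shorter and avoids having to calibrate $T_1,T_2$ by hand. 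For the type-2 error, your two-way split into a ``collision'' term $(m',n')=(m,n)$ bounded by $\lvert A_{\tilde m}\cap A_{\tilde m'}\rvert\,\lvert B_{\tilde n}\cap B_{\tilde n'}\rvert/(T_1T_2)$ and a ``residual'' term bounded via $\sum_{(m',n')\neq(m,n)}D_{m'n'}\leq\mathbbm{1}-D_{mn}$ is actually cleaner and tighter than the paper's four-way partition of $A_m\times B_n$ according to membership in $A_a$ and $B_b$, which picks up a factor of $4$ in the final bound. Both routes are correct; yours trades a one-line citation for a self-contained probabilistic step and gains a slightly sharper constant.
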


\begin{lemma}{\cite[Lemma 3.1]{loeber}}\label{lem:lober}
	Let $M\in \mathbb{N}$ be a finite number and $\epsilon\in(0,1)$. Let $\lambda > 0$ be such that $\epsilon\log(\frac{1}{\lambda} - 1) > 2$. Then, there are at least $N \geq \frac{1}{M}2^{\lfloor \lambda M \rfloor}$ subsets $A_1,...,A_N \subset [M]$ such that each $A_i$ has cardinality $\lfloor \lambda M \rfloor$. Further, the cardinalities of the pairwise intersections satisfy, $\forall \hspace{1mm} i, j \in [N], i\neq j$, 
	\begin{align*}
		|A_i \cap A_j| \leq \epsilon\lfloor\lambda M\rfloor, \hspace{1cm} 
	\end{align*}		
\end{lemma}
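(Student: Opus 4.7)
The plan is a greedy deletion argument. Write $\ell = \lfloor \lambda M \rfloor$ and let $\mathcal{F}$ denote the family of all $\binom{M}{\ell}$ size-$\ell$ subsets of $[M]$. I would construct $A_1, A_2, \dots$ iteratively: at each step pick any $A_i$ still remaining in $\mathcal{F}$, add it to the output list, and then delete from $\mathcal{F}$ every $A'$ with $|A_i \cap A'| > \epsilon \ell$ (including $A_i$ itself). With
\begin{equation*}
B := \max_{\substack{A \subset [M] \\ |A|=\ell}} \left| \{A' \subset [M] : |A'|=\ell,\ |A \cap A'| > \epsilon \ell\} \right|,
\end{equation*}
each step removes at most $B+1$ sets, so the procedure produces $N \geq \binom{M}{\ell}/(B+1)$ sets whose pairwise intersections all satisfy the required bound. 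The task therefore reduces to showing $B+1 \leq M \binom{M}{\ell}\, 2^{-\ell}$, which is essentially the tail bound $\Pr[|A \cap A'| > \epsilon \ell] \leq M \cdot 2^{-\ell}$ for a uniformly random size-$\ell$ subset $A' \subset [M]$.

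The inner probability is the upper tail of a hypergeometric random variable with mean $\lambda \ell$. Applying the standard Hoeffding domination of the hypergeometric by the binomial distribution yields
\begin{equation*}
\Pr[|A \cap A'| > \epsilon \ell] \leq \Pr[\mathrm{Bin}(\ell,\lambda) > \epsilon \ell] \leq (\ell+1) \max_{i > \epsilon \ell} \binom{\ell}{i}\lambda^i (1-\lambda)^{\ell - i}.
\end{equation*}
The hypothesis forces $\epsilon > \lambda$ (otherwise $\epsilon \log(1/\lambda - 1) \leq \lambda \log(1/\lambda - 1)$, which is bounded above by a constant less than $1$), so the maximum is attained at the left endpoint of the range, and combined with the entropy bound $\binom{\ell}{i} \leq 2^{\ell H(i/\ell)}$ this gives the standard estimate $2^{-\ell D(\epsilon \| \lambda)}$.

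The hypothesis now enters through the elementary identity $D(\epsilon \| \lambda) = \epsilon \log(1/\lambda - 1) - H(\epsilon) + \log(1/(1-\lambda))$, which together with $H(\epsilon) \leq 1$ and $\log(1/(1-\lambda)) \geq 0$ yields $D(\epsilon \| \lambda) > 1$ whenever $\epsilon \log(1/\lambda - 1) > 2$. Since $\ell + 1 \leq M$, the overall tail bound $(\ell+1) \cdot 2^{-\ell D(\epsilon \| \lambda)} \leq M \cdot 2^{-\ell}$ is immediate. I expect the only genuine difficulty to be cosmetic bookkeeping around floor/ceiling corrections for $\epsilon \ell$ and $\lambda M$, and verifying rigorously that the binomial tail's maximum indeed lies at the left endpoint; both points reduce to the relation $\epsilon > \lambda$, and the strict inequality in the hypothesis leaves ample slack to absorb the polynomial losses. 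Since the lemma is quoted verbatim from L\"ober's thesis, a streamlined write-up could simply invoke that reference while highlighting the greedy deletion step as the conceptual core.
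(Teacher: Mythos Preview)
The paper does not prove this lemma at all; it is simply quoted with a citation to L\"ober's thesis, so there is nothing to compare your argument against on the paper's side. Your greedy deletion scheme is precisely the classical Ahlswede--Dueck construction that underlies L\"ober's Lemma~3.1, and your analysis is sound: the hypergeometric-to-binomial domination, the Chernoff/relative-entropy tail estimate, and the identity $D(\epsilon\Vert\lambda)=\epsilon\log(\tfrac{1}{\lambda}-1)-H(\epsilon)+\log\tfrac{1}{1-\lambda}$ together give $D(\epsilon\Vert\lambda)>1$ under the hypothesis, which is exactly the margin needed. The observation that the hypothesis forces $\lambda<1/2$ and then $\epsilon>\lambda$ (since $\lambda\log(\tfrac{1}{\lambda}-1)$ is uniformly below $1$ on $(0,1/2)$) is correct and handles the monotonicity of the binomial tail. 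The only loose ends are the integer-rounding issues you already flag, and these are indeed absorbed by the strict inequality $D(\epsilon\Vert\lambda)>1$; a clean write-up would either cite L\"ober directly, as the paper does, or spell out the greedy step and the tail bound with explicit floors.
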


\begin{proof}[Proof of Theorem \ref{sim_ach}]
   Let $\epsilon_1, \epsilon_2, \delta \in (0,1)$ and define $\epsilon\coloneqq \min \{\epsilon_1, \sfrac{\epsilon_2}{4}\}$. Let $(R_1, R_2) \in C_{\text{st}}(\mathbf{W})$ be an achievable transmission rate pair and $\lambda >0$ such that $\epsilon \log(\frac{1}{\lambda} - 1)>2$. Let $\co = (P_i,Q_j, D_{ij})_{i=1,j=1}^{M,N}$ be a $(k,M,N)$-code that achieves $(R_1,R_2)$ such that
   \begin{align*}
      \frac{1}{k}\log M \geq R_1 - \delta,\hspace{3mm}\frac{1}{k}\log N \geq R_2 - \delta,\hspace{3mm} e(\co, W^k) \leq \epsilon.
   \end{align*}
   By Lemma \ref{lem:lober}, there exist $M'$ subsets $A_1,..., A_{M'} \subset [M]$ and $N'$ subsets $B_1,..., B_{N'} \subseteq [N]$ such that 
   \[
   M' \geq \frac{1}{M}2^{\floor{\lambda M}} \text{ and }  N' \geq \frac{1}{N'}2^{\floor{\lambda N}}, 
   \] 
   and for $i,j\in[M'], i\neq j$, and $n,m\in[N'], n\neq m$, 
   \begin{align*}
       |A_i| = \floor{\lambda M}, \hspace{2mm} |A_i \cap A_j| \leq \epsilon \floor{\lambda M}, \\ |B_n| = \floor{\lambda N}, \hspace{2mm} |B_n \cap B_m| \leq \epsilon \floor{\lambda N}.
   \end{align*}
   For each $m \in [M']$ define 
   \[ P'_m(x^k) \coloneqq \frac{1}{|A_m|}\sum_{i \in A_m}P_i(x^k),\]
   and each $n \in [N']$ define 
   \[ Q'_n(y^k) \coloneqq \frac{1}{|B_n|}\sum_{j \in B_n}Q_j(y^k).\]
   Further, define
   \[ I_{mn}\coloneqq \sum_{i\in A_m}\sum_{j\in B_n} D_{ij}. \]
   We analyze the two types of errors for the code \[ \cocid\coloneqq(P'_m, Q'_n, I_{mn})_{m=1,n=1}^{M', N'}.\] Fix $m\in[M']$ and $n \in [N']$, then
   \begin{align*}
       1 &- \sum_{x^k, y^k} P'_m(x^k) Q'_n(y^k)\tr( I_{mn}W^k(x^k,y^k) )    \\
       & \begin{aligned} 
        &= 1 - \\ &\hspace{10mm}\frac{1}{|A_m||B_n|} \sum_{\substack{x^k, y^k \\ i\in A_m\\ j\in B_n}} P_i(x^k)Q_j(y^k) \hspace{2mm}\cdot \\   & \hspace{30mm}\tr( \left(\sum_{\substack{a\in A_m\\ b\in B_n}} D_{ab}\right) W^k(x^k,y^k) ) 
       \end{aligned}\\
       &\begin{aligned} 
       \leq  1 - \\ &\hspace{-4mm}\frac{1}{|A_m||B_n|}\sum_{\substack{x^k, y^k \\ i\in A_m\\ j\in B_n}} P_i(x^k)Q_j(y^k)\tr( D_{ij}  W^k(x^k,y^k) ) 
       \end{aligned} \\
       &\leq 1 - \max_{\substack{i \in A_m \\ j \in B_n}}\sum_{x^k, y^k} P_i(x^k)Q_j(y^k)\tr( \left(D_{ij}\right) W^k(x^k,y^k) )\\
       &\leq \epsilon \leq \epsilon_1.
   \end{align*}
   For the second kind error, with fixed $(m,n)\neq (a,b)$, with $m, a \in [M']$ and  $n, b\in[N']$,
   \begin{align*}
       \sum_{x^k, y^k}& P'_m(x^k)Q'_n(y^k)\tr( I_{ab}W^k(x^k,y^k) ) \\
       &= \frac{1}{|A_m||B_n|}\sum_{\substack{x^k, y^k \\ i\in A_m\\ j \in B_n}}P_i(x^k)Q_j(y^k) \tr(  I_{ab}  W^k(x^k,y^k) )\\
       &\begin{aligned} = \frac{1}{|A_m||B_n|}\sum_{x^k, y^k}\left(  \sum_{\substack{ i \in A_m\cap A_a \\ j \in B_n\cap B_b }}\hspace{-2mm}\cdot  + \sum_{\substack{ i \in A_m \setminus A_a \\ j \in B_n \setminus B_b }}\hspace{-2mm}\cdot \hspace{1mm}+ \right.  \\  \left .  \sum_{\substack{ i \in A_m \cap A_a \\ j \in B_n \setminus B_b }}\hspace{-2mm}\cdot +\sum_{\substack{ i \in A_m \setminus A_a \\ j \in B_n \cap B_b }}\hspace{-2mm}\cdot \hspace{2mm} \right),
       \end{aligned}\\
   \end{align*}
   where we denote with $\cdot$ that the term in the sum is the same for each. We examine each piece of the sum individually. For the first part,
   \begin{align*}
      \sum_{x^k, y^k}& \sum_{\substack{ i \in A_m\cap A_a \\ j \in B_n \cap B_b }}P_i(x^k)Q_j(y^k)\tr(  I_{ab}  W^k(x^k,y^k) ) \\
      &=   \sum_{\substack{ i \in A_m\cap A_a \\ j \in B_n\cap B_b }}   \underbrace{ \sum_{x^k, y^k} P_i(x^k)Q_j(y^k)\tr(  I_{ab}  W^k(x^k,y^k) ) }_{\leq 1}\\
      &\leq |A_m \cap A_a| \cdot |B_n \cap B_b| \\
      &\leq \epsilon |A_m||B_n|,
   \end{align*}
   where in the last step we use that $i\neq a$ or $n\neq b$, and so at least one intersection of sets is not between the same set. For the next part of the sum, 
   \begin{align*}
       &\sum_{x^k, y^k}  \sum_{\substack{ i \in A_m \setminus A_a \\ j \in B_n \setminus B_b }}  P_i(x^k)Q_j(y^k)\tr(  I_{ab}  W^k(x^k,y^k) ) \\
       &=  \sum_{\substack{ x^k, y^k \\ i \in A_m \setminus A_a \\ j \in B_n \setminus B_b }}  \hspace{-2mm}P_i(x^k)Q_j(y^k)\tr(  \left( \sum_{\substack{r\in A_a \\ s \in B_b}} D_{rs}  \right)  W^k(x^k,y^k) ) \\
       &\leq \sum_{\substack{ i \in A_m \setminus A_a \\ j \in B_n \setminus B_b }} \underbrace{\sum_{x^k, y^k} P_i(x^k)Q_j(y^k)\tr(  \left( \mathbbm{1} - D_{ij}  \right)  W^k(x^k,y^k) ) }_{\leq \epsilon} \\
       &\leq \epsilon |A_m| |B_n|.
   \end{align*}
   Next,
   \begin{align*}
        & \sum_{x^k, y^k}  \sum_{\substack{ i \in A_m \cap A_a \\ j \in B_n \setminus B_b }}  P_i(x^k)Q_j(y^k)\tr(  I_{ab}  W^k(x^k,y^k) ) \\
        &\leq \sum_{\substack{ i \in A_m \cap A_a \\ j \in B_n \setminus B_b }} \underbrace{\sum_{x^k, y^k} P_i(x^k)Q_j(y^k)\tr(  \left( \mathbbm{1} - D_{ij}  \right)  W^k(x^k,y^k) ) }_{\leq \epsilon} \\
       &\leq \epsilon |A_m| |B_n|.
   \end{align*}
   Finally, 
    \begin{align*}
        &\sum_{x^k, y^k} \sum_{\substack{ i \in A_m \setminus A_a \\ j \in B_n \cap B_b }}  P_i(x^k)Q_j(y^k)\tr(  I_{ab}  W^k(x^k,y^k) ) \\
       &\leq \sum_{\substack{ i \in A_m \setminus A_a \\ j \in B_n \cap B_b }} \underbrace{\sum_{x^k, y^k} P_i(x^k)Q_j(y^k)\tr(  \left( \mathbbm{1} - D_{ij}  \right)  W^k(x^k,y^k) ) }_{\leq \epsilon} \\
       &\leq \epsilon |A_m| |B_n|.
   \end{align*}
   Combining these four sums, it holds that
   \begin{align*}
       &\begin{aligned} \frac{1}{|A_m||B_n|}\sum_{x^k, y^k}\left(  \sum_{\substack{ i \in A_m\cap A_a \\ j \in B_n\cap B_b }}\cdot  + \sum_{\substack{ i \in A_m \setminus A_a \\ j \in B_n \setminus B_b }}  \cdot \hspace{2mm}+ \right.  \\  \left .  \sum_{\substack{ i \in A_m \cap A_a \\ j \in B_n \setminus B_b }}  \cdot +\sum_{\substack{ i \in A_m \setminus A_a \\ j \in B_n \cap B_b }}  \cdot \right)
       \end{aligned}\\
       &\leq \frac{1}{|A_m||B_n|}(4 \epsilon |A_m||B_n|) \\
       &= 4\epsilon \leq \epsilon_2.
   \end{align*}
   Therefore,  with
   \begin{align*}
       M' \geq  \frac{1}{M}2^{\floor{\lambda M}} \geq 2^{\lambda 2^{k(R_1 - \delta)} - k},
   \end{align*}
    and 
   \begin{align*}
       N' \geq  \frac{1}{N}2^{\floor{\lambda N}} \geq 2^{\lambda 2^{k(R_2 - \delta)} - k},
   \end{align*}
    in the limit of $k\rightarrow \infty$, $\cocid$ achieves the simultaneous rate pair $(R_1, R_2)$, and so $(R_1, R_2) \in C_{\text{id}}^{\text{sim}}(\mathbf{W})$.
\end{proof}

\section{Proof of the Converse}
    It is shown in \cite[Theorem 4.7]{sim_id} that $C_{\text{id}}^{\text{sim}}(W)$, $W$ a DM-CCQ channel, is equal to the capacity region $C(W)$ for codes that use deterministic encoders under average error figure of merit. We need simply to show that $C(W) \subseteq C_{\text{st}}(W)$, and the converse follows.
   
    \begin{theorem}\label{thm1}
	For a CCQ channel $\mathbf{W}$, not necessarily discrete and memoryless, 
	\begin{align*}
		 C(\mathbf{W}) \subseteq  C_{\text{st}}(\mathbf{W}).
	\end{align*} 
    \end{theorem}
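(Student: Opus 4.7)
The plan is, given $(R_1, R_2) \in C(\mathbf{W})$, to take a deterministic $(k, M, N)$-code $\co = (\hat{x}_m, \hat{y}_n, D_{m,n})$ achieving this rate pair with average error $\bar{e} := \frac{1}{MN}\sum_{m,n}(1 - \tr(D_{m,n} W^k(\hat{x}_m, \hat{y}_n))) \leq \epsilon/2$, and bundle its codewords into larger blocks to obtain a stochastic code whose \emph{maximum} error matches the original \emph{average} error up to a small fluctuation. This is a Dueck-style conversion of an average-error deterministic MAC code into a max-error stochastic one, adapted to the CCQ setting.

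Concretely, choose parameters $K = L = \lceil \alpha k\rceil$ (with $\alpha$ fixed below), set $M' := \lfloor M/K\rfloor$, $N' := \lfloor N/L\rfloor$, and fix disjoint blocks $A_1, \ldots, A_{M'} \subseteq [M]$ and $B_1, \ldots, B_{N'} \subseteq [N]$ of sizes $K$ and $L$ respectively (to be selected via the probabilistic method). Define the stochastic code $\co' := (P'_m, Q'_n, D'_{m,n})$ by
\begin{align*}
P'_m := \frac{1}{K}\sum_{i \in A_m}\delta_{\hat{x}_i},\ \ Q'_n := \frac{1}{L}\sum_{j \in B_n}\delta_{\hat{y}_j},\ \ D'_{m,n} := \sum_{\substack{i \in A_m\\ j \in B_n}} D_{i,j},
\end{align*}
completing the family to a POVM by absorbing the residual $\mathbbm{1} - \sum_{m,n}D'_{m,n}$ into, say, $D'_{1,1}$ (which only decreases all errors). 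Since each $D_{i,j}$ is positive semidefinite, $D'_{m,n} \geq D_{i,j}$ whenever $(i,j) \in A_m \times B_n$, and a direct computation gives $e(\co', W^k) \leq \max_{m,n} X_{m,n}$ where
\[
X_{m,n} := \frac{1}{KL}\sum_{\substack{i \in A_m\\ j \in B_n}} e(i,j), \quad e(i,j) := 1 - \tr(D_{i,j} W^k(\hat{x}_i, \hat{y}_j)).
\]

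It remains to exhibit partitions with $\max_{m,n} X_{m,n} \leq \epsilon$. The plan is to draw the two partitions independently and uniformly at random; for each fixed $(m,n)$, $A_m$ and $B_n$ are then marginally uniform subsets of sizes $K$ and $L$, so $\mathbb{E}[X_{m,n}] = \bar{e} \leq \epsilon/2$. Two successive applications of Serfling's concentration inequality for sampling without replacement---first conditioning on $B_n$ and averaging over $A_m$, then integrating over $B_n$---yield $\Pr[X_{m,n} > \bar{e} + \epsilon/4] \leq 2\exp(-c\min(K,L)\epsilon^2)$ for a universal constant $c > 0$. A union bound over the $M'N' \leq 2^{k(R_1 + R_2)}$ pairs then shows that, for $\alpha$ large enough depending on $R_1, R_2$, and $\epsilon$, valid partitions exist with positive probability; fixing any such choice gives $e(\co', W^k) \leq \epsilon$, while $\frac{1}{k}\log M' = \frac{1}{k}\log M - O(\log k / k)$ and analogously for $N'$ preserves the rate pair in the limit $k\to\infty$.

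The main obstacle will be the concentration step, since $X_{m,n}$ is a bilinear functional of two independently sampled random partitions and the entries are drawn without replacement, so a plain Hoeffding bound does not apply directly. Serfling's hypergeometric tail inequality---or, equivalently, a McDiarmid bounded-differences argument for uniform random partitions, using that swapping two indices across blocks perturbs $X_{m,n}$ by at most $1/K$---supplies the required exponential tail, after which the union bound over the $2^{k(R_1 + R_2)}$ pairs closes the argument.
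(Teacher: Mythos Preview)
Your argument is correct and follows the classical Dueck conversion of an average-error deterministic MAC code into a maximum-error randomized one: partition each codebook into blocks, encode by randomizing uniformly within a block, merge the corresponding POVM elements, and use Serfling-type concentration plus a union bound to control all block-averaged errors simultaneously. Two cosmetic remarks: you tacitly assume an upper bound $MN\le 2^{k(R_1+R_2)}$ for the union bound, which is harmless since one may first discard surplus codewords without raising the average error; and after the two-stage conditioning the constant in front of the exponential tail is $4$ rather than $2$.

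The paper proceeds very differently and far more briefly: it sets every $P_m$ equal to the uniform distribution on the \emph{entire} codeword set $\{x_1,\dots,x_M\}$ (and likewise every $Q_n$), keeping the original POVM $(D_{mn})$, and claims the resulting per-message error equals $\bar e(\co,W^k)$ for every $(m,n)$. That step rests on an index-shadowing slip: with the outer pair $(m,n)$ fixed, the correct expansion is $1-\tfrac{1}{MN}\sum_{i,j}\tr\big(D_{mn}W^k(x_i,y_j)\big)$ with the \emph{fixed} $D_{mn}$, not $1-\tfrac{1}{MN}\sum_{m,n}\tr\big(D_{mn}W^k(x_m,y_n)\big)$. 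Since the encoders now carry no information about the message and $(D_{mn})$ is a POVM summing to the identity, these per-pair errors sum to $MN-1$, so the maximum error of that construction is at least $1-1/(MN)$. In short, the paper's one-line route does not work as written; your longer Dueck-style blocking argument is the one that actually proves $C(\mathbf{W})\subseteq C_{\text{st}}(\mathbf{W})$.
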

    
    \begin{proof}
        Let $\epsilon, \delta \in (0,1)$ and $(R_1,R_2) \in C(\mathbf{W})$ be an achievable rate pair in the sense of \cite[Definition 2.6]{sim_id} such that $\co = (x_m, y_n, D_{mn})_{m=1,n=1}^{M,N}$ is a $(k,M,N)$-code with
        \begin{align*}
            \frac{1}{k}\log M \geq R_1 - \delta,& \hspace{2mm}\frac{1}{k}\log N \geq R_2 - \delta, \hspace{2mm} \bar{e}(\co, W^k) \leq \epsilon,
        \end{align*}
       $\bar{e}(\co, W^k)$ the average error. Let $\mathbf{X}\coloneqq \{x_m\}_{m=1}^{M}$ and $\mathbf{Y}\coloneqq \{y_n\}_{n=1}^N$. For $m \in [M], n \in [N]$ define 
        \[ P_m(x^k) \coloneqq \frac{1}{M}\mathbbm{1}_\mathbf{X}(x^k) \text{ and }  Q_n(y^k)\coloneqq \frac{1}{N}\mathbbm{1}_\mathbf{Y}(y^k) \] and form the code \[ \co'\coloneqq (P_m, Q_n, D_{mn})_{m=1,n=1}^{M,N}. \]
        We analyze the error for such a code. Let $m\in[M]$ and $n\in [N]$ be fixed, then it holds
        \begin{align*}
            1 &- \sum_{x^k,y^k}P_m(x^k)Q_n(y^k)\tr( D_{mn}W^k(x^k, y^k))\\ 
            &= 1 -  \frac{1}{MN}\sum_{\substack{m\in[M] \\ n \in [N]}}\tr( D_{mn}W^k(x_m, y_n)) \\
            &= \bar{e}(\co, W^k)\\
            &\leq \epsilon, 
        \end{align*}
        which holds for any $m$ and $n$, so the error $e(\co',W^k) \leq \epsilon$.
    \end{proof}
    
    \begin{corollary}
    For a DM-CCQ channel generated by $W: \alx\cross\aly\rightarrow\st(\hi)$, 
	\begin{align*}
		 C_{\text{id}}^{\text{sim}}(W) \subseteq  C_{\text{st}}(W).
	\end{align*} 
    \end{corollary}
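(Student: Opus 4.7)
The plan is to chain two already-established inclusions, so the argument should be short. First, I would invoke the prior result \cite[Theorem 4.7]{sim_id}, which states that for a DM-CCQ channel $W$, the simultaneous identification capacity region under stochastic encoders coincides with the transmission capacity region $C(W)$ obtained with deterministic encoders under the average error criterion, that is, $C_{\text{id}}^{\text{sim}}(W) = C(W)$. This step reduces the corollary to proving a transmission-capacity inclusion rather than an identification-to-transmission inclusion, which is the part that requires genuine content.

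Next, I would apply Theorem \ref{thm1}, which was just proved in the preceding subsection, to the DM-CCQ channel $W$ in place of the abstract CCQ channel $\mathbf{W}$. That theorem gives the inclusion $C(W) \subseteq C_{\text{st}}(W)$ by showing that a deterministic $(k,M,N)$-code with small average error can be turned into a stochastic $(k,M,N)$-code with small maximal error simply by taking $P_m$ and $Q_n$ to be the uniform distributions over the two codebooks $\mathbf{X}$ and $\mathbf{Y}$. Since $P_m$ and $Q_n$ do not depend on $m$ or $n$, the maximum error in the stochastic sense collapses to the average error of the original code.

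Chaining the two facts then gives $C_{\text{id}}^{\text{sim}}(W) = C(W) \subseteq C_{\text{st}}(W)$, which is the desired conclusion. The only subtlety I anticipate is a bookkeeping check: one has to confirm that the notation $C(W)$ appearing in \cite[Theorem 4.7]{sim_id} (deterministic encoders, average error) is literally the same object that appears on the left-hand side of Theorem \ref{thm1}; if so, the corollary follows immediately with no further estimates. I would not expect any genuine obstacle beyond this verification, since the achievability direction (Theorem \ref{sim_ach}) combined with this corollary then yields Theorem \ref{main_result}.
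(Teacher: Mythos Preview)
Your proposal is correct and matches the paper's proof essentially line for line: the paper also invokes $C_{\text{id}}^{\text{sim}}(W) = C(W)$ from \cite[Theorem 4.7]{sim_id} and then applies Theorem \ref{thm1} to conclude $C_{\text{id}}^{\text{sim}}(W) = C(W) \subseteq C_{\text{st}}(W)$. The bookkeeping concern you flag is exactly the one the paper addresses in the paragraph preceding Theorem \ref{thm1}, so there is no gap.
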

    
    \begin{proof}
       For a DM-CCQ channel generated by $W: \alx\cross\aly\rightarrow\st(\hi)$, it holds that $ C_{\text{id}}^{\text{sim}}(W) = C(W)$. By Theorem \ref{thm1}, $C_{\text{id}}^{\text{sim}}(W) = C(W) \subseteq C_{\text{st}}(W)$.
    \end{proof}


\end{document}